\newtheorem{theorem}{Theorem}
\newtheorem{remark}{Remark}
\newtheorem{definition}[theorem]{Definition}
\newtheorem{notation}[theorem]{Notation}
\newtheorem{lemma}{Lemma}
\begin{document}

\title{Synchronization of piece-wise continuous systems of fractional order}

\author{MARIUS-F. DANCA\\Department of Mathematics and Computer Science, Avram Iancu University, \\Str. Ilie Macelaru, nr. 1A, 400380 Cluj-Napoca, Romania,\\and\\Romanian Institute of Science and Technology, \\Str. Ciresilor nr. 29, 400487 Cluj-Napoca, Romania}

\maketitle

\begin{abstract}

The aim of this study is to prove analytically that synchronization of a piece-wise continuous class of systems of fractional order can be achieved. Based on our knowledge, there are no numerical methods to integrate differential equations with discontinuous right hand side of fractional order which model these systems. Therefore, via Filippov's regularization \cite{fil} and Cellina's Theorem \cite{aubin,aubin2}, we prove that the initial value problem can be converted into a continuous problem of fractional-order, to which numerical methods for fractional orders apply. In this way, the synchronization problem transforms into a standard problem for continuous systems of fractional order. Three examples of fractional-order piece-wise systems are considered: Sprott system, Chen and Shimizu-–Morioka system.
\end{abstract}

\emph{Keywords: }piece-wise continuous function, fractional order system, synchronization, approximate selection, sigmoid function

\section{Introduction}

Discontinuous fractional-order systems provide a logical link between the fractional derivative approach to descriptive system properties, such as "memory" and "heredity", and the physical system properties, such as dry friction, forced vibration brake processes with locking phase, stick, and slip phenomena.

However, on our knowledge, there are very few works (if any), on discontinuous systems of fractional-order and the known synchronization algorithms apply to continuous systems of integer or fractional-order and rarely to discontinuous systems of integer order.

Also, even most of dedicated numerical methods for DE of fractional-order can be used to ``integrate'' abrupto discontinuous equations of fractional-order, this approach has not any mathematically justification (it is known that discontinuous equations may have not any solutions). Therefore, special numerical methods and approach are necessary in this case.

Nowadays, there are numerical methods for continuous DE of fractional-order (see e.g. \cite{kai1,kai2}) and also for DE of integer order with discontinuous right hand side (see e.g. \cite{dont,kas}).

Therefore, modeling continuously discontinuous systems of fractional-order, could be of a real interest in synchronization, chaos control, anticontrol but also for quantitative analysis.

The fractional-order systems considered in this paper are modeled with piece-wise continuous functions $f:\mathbb{R}^n\rightarrow \mathbb{R}^n$, of the following form
\begin{equation}\label{f}
f(x(t))=g(x(t))+Kx(t)+A(x(t))s(x(t)),
\end{equation}

\noindent where $g:\mathbb{R}^n\rightarrow \mathbb{R}^n$ is a vector single-valued, nonlinear and at least continuous function, $s:\mathbb{R}^n\rightarrow \mathbb{R}^n$, $s(x)=(s_1(x_1),s_2(x_2),...,s_n(x_n))^T$ a vector valued piece-wise function, with $s_i:\mathbb{R}\rightarrow \mathbb{R}$, $i=1,2,...,n$ real piece-wise constant functions, $A\in \mathbb{R}^{n\times n}$ a square matrix of real functions and $K\in \mathbb{R}^{n\times n}$ a square constant real matrix, $Kx$ representing the linear part of $f$.

\begin{notation}
Let denote by $\mathcal{M}$ the discontinuity set of $f$ \textup{(}of zero Lebesgue measure: $\mu(\mathcal{M})=0$\footnote{As known, the Legesgue measure of a point on the real line, as well the Lebesgue measure of a line in $\mathbb{R}^2$, or Lebesgue measure of a plane in $\mathbb{R}^3$, is zero.}\textup{)}, generated by the discontinuity points of the components $s_i$.
\end{notation}

\noindent $\mathcal{M}$ separates $\mathbb{R}^n$ in several sub-domains $\mathcal{D}_i$, where $f$ is continuous, and possible differentiable in their interior.

\noindent The following assumption will be considered

\vspace{3mm}
\noindent (\textbf{H1}) $As$ is discontinuous in at least one of his components.
\vspace{3mm}

For example, the following piece-wise continuous (linear) function $f:\mathbb{R}\rightarrow \mathbb{R}$
\begin{equation}\label{exemplu}
f(x)=2-3sgn(x),
\end{equation}

\noindent has $\mathcal{M}=\{0\}$ which determines on $\mathbb{R}$ the continuity sub-domains $\mathcal{D}_1=(-\infty, 0]$ and $\mathcal{D}_2=[0,\infty)$. The graph is plotted in Fig. \ref{fig1}a.

The form of $f$, given by (\ref{f}), appears in the great majority of nonlinear piece-wise continuous systems of fractional or integer order, which are modeled by the following Initial Value Problem (IVP)
\begin{equation}\label{IVP0}
D_*^q x(t)=f(x(t)):=g(x(t))+Kx(t)+A(x(t))s(x(t)),~~~x(0)=x_0,~~~ t\in I=[0,\infty).
\end{equation}
\noindent In this paper, $D_*^q$, with $q=(q_1,q_2,...,q_n)$, $0<q_i\leq1$, $i=1,2,...,n$ ($q=1$ for the integer order), denotes the commonly used operator in fractional calculus: Caputo's differential operator of order $q$ (called also \emph{smooth fractional derivative} with starting point 0)\cite{old,cap,pod}
\begin{equation*}
D_*^qx(t)=\frac{1}{\Gamma(1-q)}\int_0^t (t-\tau)^{-q}\frac {d}{dt}x(\tau)d\tau,
\end{equation*}

\noindent where $\Gamma$ is the Euler's Gamma function
\begin{equation*}
\Gamma(z)=\int_0^t t^{z-1}e^{-t}dt, ~~~ z\in\mathbb{C}, ~~Re(z)>0.
\end{equation*}

We consider in this paper $\mathbb{R}^3$ examples. For example, the fractional variant of the piece-wise Chen system \cite{aziz}
\begin{equation}\label{chen}
\begin{array}{l}
D_{\ast }^{q_{1}}x_{1}=a\left( x_{2}-x_{1}\right) , \\
D_{\ast }^{q_2} x_{2}=\left( c-a-x_{3}\right) sgn(x_{1})+cdx_{2}, \\
D_{\ast }^{q_3} x_{3}=x_{1}sgn(x_{2})-bx_{3}.%
\end{array}%
\end{equation}

\noindent with $a=1.18$, $b=0.16$, $c=1.2$, $d=0.1$ and the fractional-order $(q_1,q_2,q_3)$, has $g(x)=\left( 0,0,0\right) ^{T}$ (i.e. $f$ in this case is piece-wise linear), $s(x)=(sgn(x_{1}),sgn(x_{2}),\allowbreak sgn(x_{3}))^{T}$ and
\[
K=\left(
\begin{array}{ccc}
-a & a & 0 \\
0 & cd & 0 \\
0 & 0 & -b%
\end{array}%
\right),
\]
\noindent and
\[
A(x)=\left(
\begin{array}{ccc}
0 & 0 & 0 \\
c-a-x_{3} & 0 & 0 \\
0 & x_{1} & 0%
\end{array}%
\right).
\]

As in most practical examples, the use of Caputo derivative in the IVP (\ref{IVP0}) is fully justified since in these problems we need physically interpretable initial conditions, or Caputo derivative satisfies these demands, by avoiding the expression of initial conditions with fractional derivatives \cite{kai1}. Accordingly, the initial condition in (\ref{IVP0}), can be considered in the standard form $x(0) = x_0$.

To overcome the discontinuity impediment, we shall use Filippov's approach \cite{fil}. This technique targets the piece-wise constant functions $s$, and converts them in set-valued functions. Next, via Cellina's Theorem \cite{aubin,aubin2}, the set-valued functions are continuously approximated in small neighborhoods of underlying set-valued functions.

The results are valid for a large class such as \emph{Heaviside} function $H$, \emph{rectangular} function (as difference of two Heaviside functions), or \emph{signum}, one of the most encountered functions in practical applications.

The paper is organize as follows: Section \ref{switch} deals with the approximation of $f$ defined by (\ref{f}) and shows how the IVP (\ref{IVP0}) can be transformed into a continuous single valued problem. In Section 3 the asymptotically synchronization of piece-wise continuous systems of fractional-order is investigated and the necessary condition for chaotic behavior of these system is presented. In Section 4 the asymptotically synchronization is applied to three piece-wise continuous systems of fractional-order: Chen's system, Sprott's system and Shimizu-–Morioka's system. Appendix includes proofs and results utilized in the paper.

\section{Continuous approximation of $f$ }\label{switch}

In this section we prove that the considered class of piece-wise continuous functions defined in (\ref{f}), can be approximated as closely as desired with continuous functions. First, the piece-wise continuous function $f$ will be transformed into a set-valued function, which will be approximated with continuous functions. For this purpose, we will choose the way proposed by Fillipov in \cite{fil}, namely the \emph{Filippov regularization}. Thus, the discontinuous function $f$ is transformed into a convex set-valued function $F$ into the set of all subsets of $\mathbb{R}^n$, $F:\mathbb{R}^n\rightrightarrows \mathbb{R}^n$. One of the simplest expressions for $F$, is \cite{fil,aubin,aubin2}
\begin{equation}\label{fill}
F(x)=\bigcap_{\varepsilon >0}\bigcap_{\mu(\mathcal{M})=0} \overline{conv}(f({z\in \mathbb{R}^n: |z-x|\leq\varepsilon}\backslash \mathcal{M})).
\end{equation}
$F(x)$ is the convex hull of $f(x)$, $\mu$ being the Lebesgue measure and $\varepsilon$ the radius of the ball centered in $x$. At the points where $f$ is continuous, $F(x)$ consists of one single point, which coincides with the value of $f$ at this point (i.e. we get back $f(x)$ as the right hand side: $F(x)=\{f(x)\}$). In the points belonging to $\mathcal{M}$, $F(x)$ is given by (\ref{fill}).

If the piece-wise-constant functions $s_i$ are $sgn$ functions, their set-valued form, obtained with Filippov regularization, denoted by $Sgn:\mathbb{R}\rightrightarrows \mathbb{R}$, is defined as follows (see Fig. \ref{fig2}a) before regularization and Fig. \ref{fig2}b) after regularization)
\begin{equation}
Sgn(x)=\left\{
\begin{array}{cc}
\{-1\}, & x<0, \\
\lbrack -1,1], & x=0, \\
\{+1\}, & x>0.%
\end{array}%
\right.
\end{equation}

\noindent By applying the Filippov regularization to $f$, one obtains the following set-valued function
\begin{equation}
\label{IVP1}
F(x)=g(x)+Kx+A(x)S(x),
\end{equation}

\noindent with
\begin{equation}\label{s}
S(x)=(S_1(x_1),S_2(x_2),...,S_n(x_n))^T,
\end{equation}

\noindent $S_i:\mathbb{R}\rightarrow \mathbb{R}$ being the set-valued variants of $s_i$, $i=1,2,...,n$ ($Sgn(x_i)$ in the usual case of $sgn(x_i)$).

\noindent For example, the graph of the set-valued variant of $f$ defined in (\ref{exemplu}) is plotted in Fig. \ref{fig1}b.

The notions and results presented next are considered in $\mathbb{R}$, but they are also valid in the general case $\mathbb{R}^n$, $n>1$. Let a set-valued function $F:\mathbb{R}\rightrightarrows \mathbb{R}$.

\noindent A set-valued function $F$ can be characterized by its graph
\begin{equation*}
Graph(F):=\{(x,y)\in \mathbb{R}\times \mathbb{R}, ~y\in F(x)\}.
\end{equation*}

\begin{remark}\label{rr}
Due to the symmetric interpretation of a set-valued function as a graph (see e.g. \cite{aubin}) we shall say that a set-valued function satisfies a property if and only if its graph satisfies it. For instance, a set-valued function is said to be closed if and only if its graph is closed.
\end{remark}

\begin{definition}
A set-valued function $F$ is upper semicontinous (u.s.c.) at $x^0\in \mathbb{R}$, if for any open set $E$ containing $F(x^0)$, there exists a neighborhood $A$ of $x^0$ such that $F(A)\in B$.
\end{definition}

\noindent We say that $F$ is u.s.c. if it is so at every $x^0\in \mathbb{R}$.

\noindent U.s.c., which is a basic property, practically means that the graph of $F$ is closed.

\begin{definition}
A single-valued function $h:\mathbb{R}\rightarrow \mathbb{R}$ is called an \emph{approximation} (\emph{selection}) of the set-valued function $F$ if
\begin{equation*}
\forall	x\in \mathbb{R},~~h(x)\in F(x).
\end{equation*}
\end{definition}

\noindent Generally, a set-valued function admits (infinitely) many approximations (see Fig. \ref{fig1}b for the case of function defined in (\ref{exemplu})).

As proved in \cite{dan1}, the set-valued functions $S_i$, $i=1,2,...,n$, can be approximated due to the Approximate Theorem, called also Cellina's Theorem (Appendix) which states that a set-valued function $F$, with closed graph and convex values, admits continuous approximations. This result is assured by the following lemma

\begin{lemma}\label{tprinc}
For every $\varepsilon>0$, the set-valued functions $S_i$, $i=1,2,...,n$ admit continuous approximations in the $\varepsilon$-neighborhood of $S_i$.
\end{lemma}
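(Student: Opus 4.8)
The plan is to verify that each $S_i$ satisfies the hypotheses of Cellina's Theorem (closed graph, convex values) and then invoke that theorem directly. First I would recall that each $s_i:\mathbb{R}\rightarrow\mathbb{R}$ is a piece-wise constant function with finitely many jump discontinuities, so its Filippov regularization $S_i$ assigns to each continuity point the single value $s_i(x_i)$ and to each discontinuity point $c$ the closed segment $[\,\min(s_i(c^-),s_i(c^+)),\ \max(s_i(c^-),s_i(c^+))\,]$ joining the one-sided limits. In the representative case $s_i=\mathrm{sgn}$ this gives exactly the $Sgn$ map displayed above. In particular the values $S_i(x_i)$ are always closed intervals (possibly degenerate), hence compact and convex subsets of $\mathbb{R}$; that dispatches the convexity requirement immediately.

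Next I would check that the graph of $S_i$ is closed, equivalently that $S_i$ is u.s.c. in the sense of the definition given above. Away from the finite discontinuity set this is clear because $S_i$ is locally constant. At a jump point $c$, any sequence $(x^{(k)},y^{(k)})\to(c,y)$ with $y^{(k)}\in S_i(x^{(k)})$ has $y^{(k)}$ eventually trapped between (values near) $s_i(c^-)$ and $s_i(c^+)$, so the limit $y$ lies in the segment $S_i(c)$; equivalently, since we built $S_i(c)$ precisely as the convex hull of the cluster values of $s_i$ near $c$, the graph cannot fail to be closed there. Thus $\mathrm{Graph}(S_i)$ is closed, and by Remark~\ref{rr} the set-valued function $S_i$ is closed in the stipulated sense.

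Having confirmed both hypotheses, I would then apply Cellina's Theorem (the Approximate Selection Theorem quoted in the Appendix): a set-valued map with closed graph and convex values admits, for every $\varepsilon>0$, a single-valued continuous map $h_i$ whose graph lies in the $\varepsilon$-neighborhood of $\mathrm{Graph}(S_i)$ and which is an approximate selection, i.e. $h_i(x)$ lies within $\varepsilon$ of $S_i$ near $x$. This is exactly the assertion of the lemma. To stay faithful to the statement "continuous approximations in the $\varepsilon$-neighborhood of $S_i$" I would note that the conclusion of Cellina's Theorem is phrased in terms of the graph neighborhood, and that this is the meaning adopted here via the graph-based viewpoint of Remark~\ref{rr}; a concrete realization is a smooth sigmoid (e.g. $x\mapsto\tanh(x/\delta)$ for $\mathrm{sgn}$) with $\delta$ chosen small enough in terms of $\varepsilon$, which makes the abstract selection explicit and reconciles with the paper's later use of sigmoid functions.

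The main obstacle is not any deep argument but rather pinning down precisely which topology or metric the phrase "$\varepsilon$-neighborhood of $S_i$" refers to, and making sure the verification of closedness at the jump points is airtight; once the identification of $S_i$ with a closed, convex-valued (indeed interval-valued) u.s.c. map is in place, Cellina's Theorem does all the remaining work, so the proof is essentially a hypothesis-checking exercise followed by a citation.
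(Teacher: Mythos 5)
Your proposal is correct and follows essentially the same route as the paper: verify that each $S_i$ is convex-valued and u.s.c.\ (closed graph), then invoke Cellina's Theorem. The only difference is that the paper cites references for the u.s.c.\ property while you verify it directly at the jump points, which is a harmless (indeed welcome) elaboration.
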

\begin{proof} $S_i$, for $i=1,2,...,n$, are convex u.s.c. (see e.g. the Remark in \cite{fil} p. 43 or the Example in \cite{aubin2} p. 39 for u.s.c.) and, via Remark \ref{rr}, are non-empty closed valued functions. Therefore, they verifies Cellina's Theorem which guaranties the existence of continuous approximations on $\mathbb{R}$.
\end{proof}

\begin{notation}
Let denote by $\widetilde{s}_i:\mathbb{R}\rightarrow \mathbb{R}$ the approximations of $S_i$.
\end{notation}

\noindent For the sake of simplicity, for each component $\widetilde{s}_i(x_i)$, $i=1,2,...,n$, $\varepsilon_i$ will be considered as having the same value.

\noindent Some of the best candidates for $\widetilde{s}$ are the \emph{sigmoid} functions, since they provide the required flexibility and to which the abruptness of the discontinuity can be easily modified. For $S(x)=Sgn(x)$, one of the most utilized sigmoid approximations is the following function $\widetilde{sgn}$\footnote{The class of sigmoid functions includes for example the ordinary arctangent such as $\frac{2}{\pi}arctan\frac{x}{\varepsilon}$, the hyperbolic tangent, the error function, the logistic function, algebraic functions like $\frac{x}{\sqrt{\epsilon+x^2}}$, and so on.}
\begin{equation}\label{h_simplu}
\widetilde{sgn}(x)=\frac{2}{1+e^{-\frac{x}{\delta}}}-1\approx Sgn(x),
\end{equation}

\noindent where $\delta$ is a positive parameter which controls the slope in the neighborhood of the discontinuity $x=0$ (Fig. \ref{fig3}a and Fig. \ref{fig3}b).

Summarizing, we can enounce the following result, which assures the possibility to approximate continuously $f$
\begin{theorem}\label{th1}
Let $f$ defined by (\ref{f}). If $g$ is continuous, then there exist continuous approximations of $f$, $\tilde{f}:\mathbb{R}^n\rightarrow \mathbb{R}^n$
\begin{equation}\label{glo}
\tilde{f}(x)=g(x)+Kx+A(x)\widetilde{s}(x)\approx f(x).
\end{equation}

\end{theorem}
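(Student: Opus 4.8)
The plan is to reduce the statement to the already-established componentwise result. By Lemma~\ref{tprinc}, each set-valued component $S_i$ admits, for every $\varepsilon_i>0$, a continuous approximation $\widetilde{s}_i:\mathbb{R}\to\mathbb{R}$ with $\widetilde{s}_i(x_i)\in S_i(x_i)$ for all $x_i$, lying in the $\varepsilon_i$-neighborhood of $S_i$. First I would assemble these into the vector-valued map $\widetilde{s}(x)=(\widetilde{s}_1(x_1),\dots,\widetilde{s}_n(x_n))^T$; since a finite Cartesian product of continuous functions is continuous and $\widetilde{s}_i(x_i)\in S_i(x_i)$ forces $\widetilde{s}(x)\in S(x)$ for the product set-valued map $S$ of (\ref{s}), $\widetilde{s}$ is a continuous selection of $S$, approximating it as closely as desired (take all $\varepsilon_i$ equal, as stipulated in the Notation following Lemma~\ref{tprinc}).

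Next I would define $\tilde f(x):=g(x)+Kx+A(x)\widetilde{s}(x)$ as in (\ref{glo}) and check that it is continuous: $g$ is continuous by hypothesis, $x\mapsto Kx$ is linear hence continuous, the entries of $A(x)$ are real functions which the paper assumes continuous (e.g. polynomial, as in the Chen example), and $\widetilde{s}$ is continuous by the previous step; continuity is then preserved under matrix-vector multiplication and finite sums. Finally I would argue that $\tilde f$ is genuinely a (continuous) approximation of $f$ in the appropriate sense, namely a continuous selection of the Filippov set-valued map $F(x)=g(x)+Kx+A(x)S(x)$ of (\ref{IVP1}): since $\widetilde{s}(x)\in S(x)$, we get $A(x)\widetilde{s}(x)\in A(x)S(x)$, so $\tilde f(x)\in F(x)$ for every $x$, and the approximation error is controlled by the common $\varepsilon$ governing the $\widetilde{s}_i$, scaled by a local bound on $\|A(x)\|$.

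I do not expect a serious obstacle here — the theorem is essentially a packaging of Lemma~\ref{tprinc}. The one point deserving care is that Cellina's Theorem is applied componentwise to each scalar $S_i$ rather than directly to the full $\mathbb{R}^n$-valued map $F$; one should note that this is legitimate because $S$ has the product structure (\ref{s}) and the product of u.s.c., closed, convex-valued maps is again u.s.c., closed and convex-valued, so Cellina would apply directly to $S$ as well and yields the same class of selections. A secondary subtlety is that $\tilde f$ is a selection of $F$ only at points of $\mathcal{M}$; away from $\mathcal{M}$ one has $F(x)=\{f(x)\}$, yet $\tilde f(x)$ need not equal $f(x)$ there — it only lies within the $\varepsilon$-tube around the graph. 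This is consistent with the paper's informal use of ``$\approx$'' in (\ref{glo}) and should be stated explicitly: $\tilde f$ approximates $f$ uniformly on compacta up to an error that tends to $0$ as $\varepsilon\to 0$, while coinciding with a selection of the regularized field $F$ everywhere.
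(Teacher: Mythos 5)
Your overall route is the paper's: Theorem~\ref{th1} is stated there without a separate proof, as a direct packaging of Lemma~\ref{tprinc} (replace $s$ by $\tilde s$ componentwise and observe that continuity of $g$, of $x\mapsto Kx$, and of the entries of $A(x)$ makes $\tilde f$ in (\ref{glo}) continuous). Your continuity bookkeeping, and your remark that continuity of the entries of $A(x)$ is only tacitly assumed in the paper, are correct and if anything more explicit than the original.

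One internal inconsistency should be repaired. In your first paragraph you assert that $\widetilde{s}_i(x_i)\in S_i(x_i)$ for all $x_i$, hence that $\tilde f(x)\in F(x)$ everywhere, and you repeat at the very end that $\tilde f$ ``coincides with a selection of the regularized field $F$ everywhere.'' This is false for the approximations actually produced by Cellina's Theorem and used in the paper: the sigmoid (\ref{h_simplu}) satisfies $\widetilde{sgn}(x)\in(-1,1)$, so $\widetilde{sgn}(x)\notin\{1\}=Sgn(x)$ for $x>0$. Cellina's Theorem delivers an \emph{approximate} selection in the graph sense, $Graph(\widetilde{s}_i)\subset B(Graph(S_i),\varepsilon)$, with values in the convex hull of the image of $S_i$ --- not a pointwise selection $\widetilde{s}_i(x)\in S_i(x)$. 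Your own closing paragraph states the correct version (the graph of $\tilde f$ lies in an $\varepsilon$-tube around $Graph(F)$, and $\tilde f$ need not equal $f$ off $\mathcal{M}$); keep that, delete the two contradictory claims, and the argument is exactly the paper's.
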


\noindent Specifically, Theorem \ref{th1} actually means that the considered function $f$ can be approximated simply by replacing $s$ with $\tilde{s}$ (dotted line in sketch in Fig. \ref{schema}). For example, $f$ defined by (\ref{exemplu}), can be approximated as follows
\begin{equation}
\tilde{f}(x)=2-3\widetilde{sgn}(x)=2-3\left(\frac{2}{1+e^{-\frac{x  }{\delta}}}-1\right).
\end{equation}

\noindent Theorem \ref{th1} states that systems modeled by the IVP (\ref{IVP0}), can be continuously approximated by the following continuous IVP
\begin{equation*}
D_*^q(x)=\tilde{f}(x),~~~x(0)=x_0,
\end{equation*}

\noindent with $\tilde{f}$ defined by (\ref{glo}).

\section{Synchronization}\label{sincro}
Once we proved that systems modeled by (\ref{IVP0}) can be continuously approximated, they can be synchronized via any kind of synchronization schemes for continuous systems. In this paper we consider the synchronization in coupled chaotic system via \emph{master-slave} configuration.

As known, a linear autonomous system of fractional-order is \emph{asymptotically stable }if his zero (equilibrium) point is asymptotically stable.

The computation of the Jacobian requires the following assumption
\vspace{3mm}

\noindent (\textbf{H2})
Function $g$ in (\ref{f}) is suppose to be differentiable on $\mathbb{R}^n$.
\vspace{3mm}

Since the discontinuous functions appearing in the considered examples are $sgn$, next we study some properties of this function and its approximations.

Let $X^*$ and $\tilde{X}^*$ the equilibrium points of $f$ and $\tilde{f}$ respectively, and $J$ and $\tilde{J}$ the related Jacobians.

\vspace{3mm}
\textbf{Property 1}\label{prop1}
For every $\delta>0$, there exists a small neighborhood of $\mathcal{M}$, $\mathcal{V}$, depending on $\delta$, such that $\tilde{X}^*\approx X^*$ for $x\not\in\mathcal{V}$.
\vspace{3mm}

\noindent See the proof in Appendix.

\begin{remark}\label{cater}
As known, the error of ABM method utilized in this paper to integrate the fractional DE, is of order $O(h^p)$ \cite{kai1} with $p = min(2,1+q_{min})$. For our step size $h=0.005$, this error is of order of $1e-5$. On the other side, in order to ensure the validity of Property \ref{prop1} for a large class of systems (\ref{IVP0}), the size of $\mathcal{V}$ must be smaller than $h^p$. $\delta=1/100000$, proves to be an acceptable compromise between the numerical accuracy and computer precision and also assures the requirements for Property \ref{prop1}. For this choice, $\mathcal{V}=(-1.589e-4,1.589e-4)$ and for $x\not\in\mathcal{V}$, the difference between $\widetilde{sgn}$ and the branch $\pm1$ of the function $sgn$ is of order of $1e-7$. In all studied examples, $X^* (\tilde{X}^*)$ are situated outside of these neighborhoods.
\end{remark}

\noindent Regarding the size of $\delta$ the following hypothesis is considered

\vspace{3mm}
\noindent (\textbf{H3})\label{del} In this paper we chosen $\delta=1/100000$.
\vspace{3mm}

\noindent For the derivative $\frac{d}{dx}\widetilde{sgn}$ (plotted as function on $\delta$ in Fig. \ref{fig55}a), the following property holds
\vspace{3mm}
\textbf{Property 2}\label{prop2}
Assume \textup{(\textbf{H2})}. For every $\delta>0$, there exists a neighborhood of $x=0$, $\mathcal{V}$, depending on $\delta$ such that $\tilde{J}|_{\widetilde{X}^*}\approx J|_{X^*}$, for $x\not\in\mathcal{V}$.
\vspace{3mm}

\noindent See the proof in Appendix.

Let consider the following piece-wise continuous master system of fractional-order
\begin{equation}\label{master0}
D^q_*x=f(x):=g(x)+Kx+A(x)s(x), ~~~x(0)=x_0,
\end{equation}

\noindent and the slave system
\begin{equation}\label{slave0}
D^q_*y=f(y)+u:=g(y)+Ky+A(y)s(y)+u,~~~y(0)=y_0,
\end{equation}

\noindent where $u \in \mathbb{R}^n$ is the control designed such as the state of the slave system (\ref{slave0}) evolves as the states of the master system (\ref{master0}).

\noindent After continuous approximation, the master system becomes
\begin{equation}\label{master}
D^q_*x=\tilde{f}(x):=g(x)+Kx+A(x)\tilde{s}(x),~~~x_0=x(0),
\end{equation}
\noindent and the slave system
\begin{equation}\label{slave}
D^q_*y=\tilde{f}(y)+u:=g(y)+Ky+A(y)\tilde{s}(y)+u,~~~y_0=y(0).
\end{equation}
The utilized active control method requires to design $u \in \mathbb{R}^n$ such as the error, defined as  $e=y-x$, tends asymptotically to zero: $\underset{t\rightarrow \infty }{\lim }||e(t)||=0$ ($||\cdot||$ being the Euclidean norm). Thus, the error dynamical system is obtained by subtracting (\ref{master}) from (\ref{slave})
\begin{equation}\label{error}
D^q_*e=g(y)-g(x)+Ke+A(y)\tilde{s}(y)-A(x)\tilde{s}(x)+u,~~~e(0)=y(0)-x(0).
\end{equation}
\noindent and the asymptotically synchronization transforms into asymptotically stability of the zero equilibrium point of (\ref{error}). For this purpose, $u$ has to be defined such as the error system becomes an asymptotically stable linear system. The usual choice for the active control is
\begin{equation*}
u=-g(y)+g(x)-A(y)\widetilde{s}(y)+A(x)\widetilde{s}(x)+v,
\end{equation*}
\noindent with $v\in \mathbb{R}^n$, $v=Me$, where $M$ is some real square matrix $M\in R^{n\times n}$, which can be chosen in many possible ways. Thus, by replacing $u$ in (\ref{error}), the error system becomes a linear system
\begin{equation}\label{error2}
D^q_*e=Ee, ~~~e(0)=y(0)-x(0),
\end{equation}

\noindent with $E=M+K$.

\begin{theorem}\label{stabt}
The piece-wise continuous master-slave system \textup{(\ref{master0})-(\ref{slave0})} asymptotically synchronizes if and only if:
\itemize
\item[a.] for the commensurate case, all eigenvalues $\lambda$ of $E$, verify the condition
\begin{equation}\label{stab1}
|arg(\lambda)|>q \pi/2;
\end{equation}\item[b.] for the incommensurate case $q_i=k_i/m_i<1$, $k_i$, $m_i\in \mathbb{N}$, $m_i\neq 0$, for $i=1,2,...,n$, $k_i$, $m_i$ being coprime positive integers, $(k_i,m_i)=1$, all the roots of the characteristic equation
\begin{equation}\label{car}
P(\lambda):=det(diag [\lambda^{mq_1},\lambda^{mq_2},...,\lambda^{mq_n}]-E)=0,
\end{equation}

\noindent with $m$ the least common multiple of the denominators $m_i$ verify the condition
\begin{equation}
\label{stab2}
|arg(\lambda)|> \pi/2m.
\end{equation}

\enditemize

\end{theorem}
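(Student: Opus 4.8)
The plan is to reduce asymptotic synchronization to asymptotic stability of a linear fractional-order system and then invoke the classical spectral criteria for such systems.

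First I would make the reduction explicit. By Theorem~\ref{th1} the master--slave pair \eqref{master0}--\eqref{slave0} may be replaced by the continuous approximations \eqref{master}--\eqref{slave}, and with the active control $u=-g(y)+g(x)-A(y)\widetilde s(y)+A(x)\widetilde s(x)+Me$ every nonlinear contribution to the error equation \eqref{error} cancels identically; note that this cancellation is exact, so no linearization and no appeal to Property 1 or Property 2 is needed at this stage. What remains is the linear system \eqref{error2}, $D^q_* e = Ee$ with $E=M+K$. Since asymptotic synchronization means $\lim_{t\to\infty}\|e(t)\|=0$ for the solution of \eqref{error} from an arbitrary initial error, the theorem is equivalent to the statement: the zero solution of \eqref{error2} is asymptotically stable if and only if \eqref{stab1} holds (commensurate case), resp. \eqref{stab2} holds (incommensurate case). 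The two items are then handled separately.

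\textbf{Part (a).} Here $q_1=\dots=q_n=q$, so \eqref{error2} is a commensurate linear Caputo system whose solution is $e(t)=E_q(Et^q)\,e(0)$ in terms of the matrix Mittag-Leffler function. I would put $E$ in Jordan form and use the large-$t$ asymptotics of $E_q(\lambda t^q)$: the mode attached to an eigenvalue $\lambda$ decays like a negative power of $t$ exactly when $|\arg\lambda|>q\pi/2$, grows exponentially when $|\arg\lambda|<q\pi/2$, and fails to decay on the boundary ray. This is the classical Matignon criterion and yields both implications at once. \textbf{Part (b).} For incommensurate orders $q_i=k_i/m_i$ I would Laplace-transform \eqref{error2}, obtaining $\bigl(\mathrm{diag}[s^{q_1},\dots,s^{q_n}]-E\bigr)\widehat e(s)=(\text{a polynomial in the initial data})$, so the characteristic function is $\det(\mathrm{diag}[s^{q_1},\dots,s^{q_n}]-E)$. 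With $m=\mathrm{lcm}(m_1,\dots,m_n)$, the substitution $s=\lambda^{m}$ makes each exponent $mq_i$ a nonnegative integer and converts the characteristic function into the polynomial $P(\lambda)$ of \eqref{car}. The closed right half-plane $\{\Re s\ge0\}=\{|\arg s|\le\pi/2\}$ pulls back under $s=\lambda^{m}$ to the sector $\{|\arg\lambda|\le\pi/(2m)\}$, so the error tends to zero iff the characteristic function has no zero with $\Re s\ge0$ (with the customary care at $s=0$), i.e. iff $P$ has no root in that sector, which is precisely \eqref{stab2}. This is the criterion of Deng, Li and L\"u.

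The main obstacle will be the careful treatment of the multivalued powers $s^{q_i}$: one must work on the correct Riemann sheet when inverting the Laplace transform, and for the ``only if'' direction one must verify that a characteristic root violating \eqref{stab1} or \eqref{stab2} genuinely produces a component of $e(t)$ that does not converge to $0$. Both points are standard in the stability theory of linear fractional-order systems, so in the final write-up I would carry out the reduction above in detail and then cite Matignon's theorem for part (a) and the Deng--Li--L\"u theorem for part (b).
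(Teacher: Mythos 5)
Your proposal follows essentially the same route as the paper: reduce the problem via the continuous approximation and the active control to the linear error system $D^q_*e=Ee$, then invoke Matignon's criterion for the commensurate case and the Deng--Li--L\"u characteristic-equation criterion for the incommensurate case. The paper's own proof is a two-line citation of exactly these results (additionally invoking Properties 1 and 2 to justify passing to the continuous systems), so your more detailed write-up, including the correct observation that the cancellation in the error equation is exact, is a faithful elaboration of the intended argument.
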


\begin{proof}
Under the assumptions given by Properties \ref{prop1} and Property \ref{prop2}, the master and the slave systems transform into continuous systems of fractional-order and the proof follows the same steps such as the proof for original theorems for continuous systems (see \cite{maty,deng} for commensurate and incommensurate cases respectively).
\end{proof}

Summarizing, if we denote with $\Lambda$ the spectrum of the eigenvalues of $E$ or of the roots of (\ref{car}), and with $\alpha_{min}=min\{|arg(\Lambda)\}$, the sufficient and necessary asymptotically synchronization conditions (\ref{stab1}) and (\ref{stab2}) can be written as follows
\begin{equation}
\label{stab3}
\alpha_{min}> \gamma\pi/2,
\end{equation}

\noindent where $\gamma=q$ for the commensurate case, and $\gamma=1/m$ in the case of incommensurate case, or $\Lambda$ is included in the domain $\Omega$ defined as follows (Fig. \ref{fig66}a)
\[
\Lambda \subset\Omega =\{\lambda \in \mathbb{C},|\arg \{\lambda \}|>\gamma\pi/2\}.
\]

\begin{remark}

\itemize
\item[i)] Relations (\ref{stab1}), (\ref{stab2}) or  (\ref{stab3}) with $"\geq"$ instead $"<"$, mean that there are $\lambda_i$ situated on the separatrices $d_{1,2}$ having the equations $\pm tan(\gamma\pi/2)$ (Fig. \ref{fig66}). If those eigenvalues (or roots) situated on $d_{1,2}$ (for example the points $\lambda^*$ and $\bar{\lambda}^*$ in Fig. \ref{fig66}b which satisfy the equality), have geometric multiplicity of one, then the synchronization is only stable and not asymptotical stable.\footnote{The geometric multiplicity represents the dimension of the eigenspace of eigenvalues.}

\item[ii)] It is to note that $arg$ should be not considered $arctan$ function, since $arctan\in (-\pi/2,\pi/2)$, while $arg \in[-\pi,\pi]$. A possible choice is the function $atan2$ implemented in some software packages or, for example, the formulae  $arg(z)=arctan(y/x)+pi/2sign\\ (y)(1-sign(x))$, or $2arctan(\frac{\sqrt{x^2+y^2}-x}{y})$, where $z=x+iy$.
\enditemize
\end{remark}

\paragraph{Chaotic piece-wise continuous systems of fractional-order}

Since in this paper the synchronization deals with chaotic motions, we shall study computationally the existence of chaotic behaviors for the considered examples, beside a necessary criterion for chaos existence in fractional-order nonlinear systems, derived from the Stability Theorem \ref{stabt}.

\noindent We do not consider here the qualitative aspects of the equilibrium points but only the necessary condition for chaos and numerical evidences of chaotic motions (see e.g. \cite{tava2} for a study on the number of saddle points, underlying eigenvalues, one-scroll, double-scroll).

\noindent Hereafter, it is supposed that the determination of $\tilde{X}^*$ and $\tilde{J}$, are assured by Properties \ref{prop1} and \ref{prop2}.

\noindent The utilized numerical method is the Adamas-Bashforth-Moulton variant, proposed by Kai et al in \cite{kai1}, and the utilized step size $h=0.005$.

\noindent The trajectories corners which can be seen in phase plots and time series, are typical to discontinuous systems \cite{dont,kas}.

Let consider an approximated system of fractional-order (\ref{master}), with $\tilde{X}^*$ the equilibrium points and $\tilde{J}$ the Jacobian matrix.

\noindent The condition necessary for chaotic motion related to one of the equilibria $X^*$, is \cite{tava}
\begin{equation}\label{nece}
\alpha_{min}\leq\gamma\pi/2,
\end{equation}

\noindent or $\Lambda\subset\Phi=\{\lambda\in \mathbb{C}, |arg(\lambda)|\geq\alpha_{min}, ~\alpha_{min}\leq\gamma\pi/2\}$ (Fig. \ref{fig66}b).

In this case, $\alpha_{min}$ is determined for the Jacobian $\tilde{J}_{X^*}$, or for the roots of the characteristic equation
\begin{equation}\label{carhaos}
P(\lambda):=det(diag [\lambda^{mq_1},\lambda^{mq_2},...,\lambda^{mq_n}]-\tilde{J}_{X^*})=0
\end{equation}

\noindent Even ({\ref{nece}) is only a necessary condition for chaos, it is useful to find the minimum commensurate order $q$ for system (\ref{IVP0}) to remain chaotic. Thus, once we find $\Lambda$ and $\alpha_{min}$, by setting in (\ref{nece}) $\gamma=q$, the values of $q$ from which chaos might appear are
\begin{equation}\label{inec}
q>q_{min}=\frac{2}{\pi}\alpha_{min}.
\end{equation}

\noindent For example let us consider the Chen's system (\ref{chen})
\begin{equation}
\begin{array}{l}
D_{\ast }^{q_{1}}x_{1}=a\left( x_{2}-x_{1}\right) , \\
D_{\ast }^{q_2} x_{2}=\left( c-a-x_{3}\right)sgn(x_{1}) +cdx_{2}, \\
D_{\ast }^{q_3} x_{3}=x_{1}sgn(x_{2})-bx_{3}.%
\end{array}%
\end{equation}

\noindent and his the continuous approximation.
\begin{equation}\label{chen_apr}
\begin{array}{l}
D_{\ast }^{q_{1}}x_{1}=a\left( x_{2}-x_{1}\right) , \\
D_{\ast }^{q_2} x_{2}=\left( c-a-x_3\right) \widetilde{sgn}(x_{1})+cdx_{2}, \\
D_{\ast }^{q_3} x_{3}=x_{1}\widetilde{sgn}(x_{2})-bx_{3}.%
\end{array}%
\end{equation}

\noindent Beside the origin, the system has two other equilibria $\tilde{X}^*_{1,2}=(\pm 0.003, \pm0.003,0.020)$ and the Jacobian
\[
\tilde{J}=\left(
\begin{array}{ccc}
-a & a & 0 \\
0 & cd & -sgn(x_1) \\
sgn(x_2) & 0 & -b%
\end{array}%
\right) _{\tilde{X}^{\ast }_{1,2}}.
\]

\noindent The eigenvalues at $\tilde{X}_{1,2}^*$ are  $\Lambda=(-1.636,0.208+0.815i,\\0.208-0.815i)$, $arg\{\Lambda\}=\{3.141,1.321,-1.321\}$.

\noindent Let consider the case of $\tilde{X}^*_1=(0.003, 0.003,0.020)$, similar results being obtained for $\tilde{X}^*_2$.
\paragraph{Commensurate case}: $q=0.99$. Condition (\ref{nece}) is verified: $\alpha_{min}=1.321<1.555=0.99\pi/2$, $\Lambda\subset \Phi$ (Fig. \ref{fig6}a), and for this value of $q$, the system is chaotic (Fig. \ref{fig6}b).

\noindent The minim commensurate value of $q$ to have chaotic behavior is, via (\ref{inec}), $q>q_{min}=2\alpha_{min}/\pi =0.841$.

\paragraph{Incommensurate case}: $q=(1,0.9,1)$. The characteristic polynomial (\ref{carhaos}) is
\begin{equation*}
\lambda^{29}-3/25\lambda^{20}+67/50\lambda^{19}-201/1250\lambda^{10}+118/625\lambda^{9}+36167/31250.
\end{equation*}

\noindent Here, $m=10$, $\gamma=1/10$ and $\alpha_{min}=0.138<0.157=\pi/20$. Therefore condition (\ref{nece}) is verified and $\Lambda\subset \Phi$ (Fig. \ref{fig6}c). The system behaves chaotic (Fig. \ref{fig6}d).

\section{Applications}
In this section we synchronize three-dimensional systems: Sprott's systems, Chen's systems and Shimizu-–Morioka's systems, after which, we synchronize two different systems: Sprott's and Chen's systems.

For all numerical experiments, beside phase plots (where the trajectories of the master and slave systems are overplotted) and time series, we calculated the Hausdorff distance $d_H$ (Appendix). After few hundreds steps have been neglected, $d_H$ is of order of $1e-5$. Time series details (for $t\in[0,20]$) are also plotted to reveal the synchronization process. The roots of the characteristic equations have been calculated with Matlab function \emph{solve}.

The integration is made via ABM algorithm with the time step size $h=0.005$.
\subsection{Sprott system}

The Sprott's system \cite{sprotus,sprot2}, considered bellow as master system, has the following approximated form
\begin{equation}\label{spr_apr}
\begin{array}{l}
D_{\ast }^{q_1}x_{1}=x_{2}, \\
D_{\ast }^{q_2}x_{2}=x_{3}, \\
D_{\ast }^{q_3}x_{3}=-x_{1}-x_{2}-0.5x_{3}+\widetilde{sgn}(x_{1}),%
\end{array}%
\end{equation}

\noindent with the initial condition $x(0)=(0.29,0.12,0.22)^T$, and the slave system is
\begin{equation}\label{spr_slave}
\begin{array}{l}
D_{\ast }^{q_1}y_{1}=y_{2}+u_1, \\
D_{\ast }^{q_2}y_{2}=y_{3}+u_2, \\
D_{\ast }^{q_3}y_{3}=-y_{1}-y_{2}-0.5y_{3}+\widetilde{sgn}(y_{1})+u_3,
\end{array}%
\end{equation}

\noindent with the initial condition $y(0)=x(0)+(0.1,0.1,0.1)^T$, where the controller $u(t)=(u_1(t),u_2(t),u_3(t))^T\in \mathbb{R}^3$ has to be defined next. Subtracting (\ref{spr_apr}) from (\ref{spr_slave}), one obtains the error system
\begin{equation}\label{spr_er}
\begin{array}{l}
D_{\ast }^{q_1}e_{1}=e_{2}+u_1, \\
D_{\ast }^{q_2}e_{2}=e_{3}+u_2, \\
D_{\ast }^{q_3}e_{3}=-e_1-e_2-0.5e_3+\widetilde{sgn}(y_1)-\widetilde{sgn}(x_1)+u_3,
\end{array}%
\end{equation}

\noindent where $e_i=y_i-x_i$, $i=1,2,3$, are the synchronization errors with $e(0)=y(0)-x(0)$.

\noindent Here
\[
K=\left(
\begin{array}{ccc}
0 & 1 & 0 \\
0 & 0 & 1 \\
-1 & -1 & -0.5%
\end{array}%
\right),
\]

\noindent and by choosing the controller
\[
u=\left(
\begin{array}{c}
0 \\
0 \\
\widetilde{sgn}(x_{1})-\widetilde{sgn}(y_{1})%
\end{array}%
\right) ~+M\left(
\begin{array}{c}
e_{1} \\
e_{2} \\
e_{3}%
\end{array}%
\right), ~
\]

\noindent with
\[
M=\left(
\begin{array}{ccc}
0 & 0 & 0 \\
0 & 0 & 0 \\
0 & 0 & -1.5%
\end{array}%
\right),
\]

\noindent the error system receives the linear form
\begin{equation}
\begin{array}{l}
D_{\ast }^{q_1} e_{1}=e_2, \\
D_{\ast }^{q_2} e_{2}=e_3,\\
D_{\ast }^{q_3} e_{3}=-e_1-e_2-2e_3,
\end{array}
\end{equation}

\noindent with
\[
E=M+K=\left(
\begin{array}{ccc}
0 & 1 & 0 \\
0 & 0 & 1 \\
-1 & -1 & -2%
\end{array}%
\right).
\]

\paragraph{Commensurate case}: $q_1=q_2=q_3=0.92$. The eigenvalues of $E$ are $\lambda_1=-1.755$ and $\lambda_{2,3}= -0.123\pm0.745i$ with arguments $arg(\lambda_1)=\pi$ ($\lambda_1$ is located on the negative real axis where a complex number has $arg=\pi$), and $arg(\lambda_{2,3})= \pm1.734$. Therefore $\alpha_{min}=1.734>1.445=0.92\pi/2$ and all eigenvalues are located inside the asymptotically stability region $\Omega$ (Fig. \ref{fig7}a) and the systems synchronize (see Fig. \ref{fig7}b and Fig. \ref{fig7}c).

\begin{remark}
Generally, the controller $u$ for systems defined as $D^q_* x_i=x_{i+1}$, $i=1,2,...,n-1$, and $D_*^q x_n=f(x)$, can be defined for only the $n$th equation.
\end{remark}
\vspace{3mm}
\paragraph{Incommensurate case}: $q=(0.9,1,0.9)$. Let us consider the master-slave system (\ref{spr_apr})-(\ref{spr_slave}) with the same controller. $m=10$, and the characteristic polynomial (\ref{car}) is
\begin{equation}
P(\lambda):=\lambda^{28}+2\lambda^{19}+\lambda^9+1.
\end{equation}

\noindent The images of the $28$ complex roots are situated inside the stability region $\Omega$ and $\alpha_{min}=0.1814>0.1571=\pi/20$ (Fig. \ref{fig8}a) and the two systems synchronize (Fig. \ref{fig8}b and Fig. \ref{fig8}c).

\subsection{Chen system}

Let the master system
\begin{equation}\label{chenus}
\begin{array}{l}
D_{\ast }^{q_{1}}x_{1}=a\left( x_{2}-x_{1}\right) , \\
D_{\ast }^{q_2} x_{2}=\left( c-a-x_3\right )\widetilde{sgn}(x_{1})+cdx_{2}, \\
D_{\ast }^{q_3} x_{3}=x_{1}\widetilde{sgn}(x_{2})-bx_{3},
\end{array}%
\end{equation}

\noindent and the slave system
\begin{equation}\label{chen_slave}
\begin{array}{l}
D_{\ast }^{q_{1}}y_{1}=a\left( y_{2}-y_{1}\right) +u_1, \\
D_{\ast }^{q_2} y_{2}=\left(c-a-y_3\right)\widetilde{sgn}(y_{1})+cdy_{2}+u_2, \\
D_{\ast }^{q_3} y_{3}=y_{1}\widetilde{sgn}(y_{2})-by_{3}+u_3.%
\end{array}%
\end{equation}

\noindent With the initial conditions $x(0)=[-0.009,-0.012,\allowbreak 0.020]^T$ and $y(0)=x(0)+[0.005,0.005,0.005]^T$\footnote{The relatively small difference between $x(0)$ and $y(0)$ is in agreement with the small size of attractor, which as can be see in, e.g., Fig. \ref{fig10}, is of order of $10^{-2}$, and also avoid long transients before the trajectories reach the attractors.}, if we chose the controller $u=(u_1,u_2,u_3)^T$
\begin{equation*}
\begin{array}{ll}
u_1&=-ae_2+(a-2)e_1,\\
u_2&=(c-a-x_3)\widetilde{sgn}(x_1)-(c-a-y_3)\widetilde{sgn}(y_1)\\
&-(cd+1)e_2,\\
u_3&=x_1\widetilde{sgn}(x_2)-y_1\widetilde{sgn}(y_2)+(b-3)e_3,
\end{array}
\end{equation*}

\noindent where $e_i=y_i-x_i$, $i=1,2,3$, the following error system is obtained
\begin{equation}\label{chen_er}
\begin{array}{l}
D_{\ast }^{q_1} e_{1}=-2e_1, \\
D_{\ast }^{q_2} e_{2}=-e_2,\\
D_{\ast }^{q_3} e_{3}=-3e_3.
\end{array}
\end{equation}

\noindent Here $E$
\[
E=\left(
\begin{array}{ccc}
-2 & 0 & 0 \\
0 & -1 & 0 \\
0 & 0 & -3%
\end{array}%
\right).
\]

\paragraph{Commensurate case}: $q_1=q_2=q_3=0.998$. The eigenvalues of the Jacobian matrix are $\Lambda=(-3,-2,-1)$ with $\alpha_{min}=\pi>1.568=0.998\pi/2$ (Fig. \ref{fig9}a) and the synchronization can be made since the eigenvalues belong to $\Omega$ (Fig. \ref{fig9}b and Fig. \ref{fig9}c).

\paragraph{Incommensurate case}: $q=(0.99,0.98,0.97)$. The fractional error system is asymptotically stable because, the roots of the underlying characteristic equation (\ref{car})
\[
(\lambda^{99}+2)(\lambda^{98}+1)(\lambda^{97}+3)=0,
 \]

\noindent are placed in $\Omega$ (Fig. \ref{fig10}a) and $\alpha_{min}=0.032>0.016=\pi/200$. The two systems synchronize (Fig. \ref{fig10}b and \ref{fig10}c).

\subsection{Shimizu--Morioka system}

The fractional variant of the piece-wise continuous Shimizu–Morioka's system has the following mathematical model \cite{simio,walace}
\begin{equation}\label{sim_master}
\begin{array}{l}
D_{\ast }^{q_{1}}x_{1}=x_{2}, \\
D_{\ast }^{q_2} x_{2}=\widetilde{sgn}(x_1)-x_3\widetilde{sgn}(x_{1})-\alpha x_2 , \\
D_{\ast }^{q_3} x_{3}=x_1^2-\beta x_3,%
\end{array}%
\end{equation}

\noindent where $\alpha=0.75$ and $\beta=0.45$. Beside the origin, the other two equilibrium points are $\tilde{X}^*_{1,2}=(\pm\sqrt{\beta},0,1)$ and the Jacobian is
\[
\tilde{J}=\left(
\begin{array}{ccc}
0 & 1 & 0 \\
0 & -\alpha & -sgn(x_1) \\
2x_1 & 0 & -\beta%
\end{array}%
\right) _{\tilde{X}^{\ast }_{1,2}}.
\]

\noindent Again Properties \ref{prop1} and \ref{prop2} have been used.

\noindent Let (\ref{sim_master}) be the master system and
\begin{equation}\label{sim_slave}
\begin{array}{l}
D_{\ast }^{q_{1}}y_{1}=y_{2}+u_1, \\
D_{\ast }^{q_2} y_{2}=\widetilde{sgn}(y_1)-y_3\widetilde{sgn}(y_{1})-\alpha y_2+u_2 , \\
D_{\ast }^{q_3} y_{3}=y_1^2-\beta y_3+u_3.%
\end{array}%
\end{equation}

\noindent the slave system.

\noindent If we chosen $u_1=-e_1-e_2$, $u_2=-\widetilde{sgn}(y_1)+y_3\widetilde{sgn}(y_1)+\widetilde{sgn}(x_1)-x_3\widetilde{sgn}(x_1)$ and $u_3=-y_1^2+x_1^2$, the error system is
\begin{equation}\label{sim_error}
\begin{array}{l}
D_{\ast }^{q_{1}}e_{1}=-e_1, \\
D_{\ast }^{q_2} e_{2}=-\alpha e_2, \\
D_{\ast }^{q_3} e_{3}=-\beta e_3.%
\end{array}%
\end{equation}

\noindent Let next consider $\tilde{X}_1^*$.

\paragraph{Commensurate case}: $q_1=q_2=q_3=0.9$. Then $\Lambda=(-1.000,-0.750,-0.450)$ and $\alpha_{min}=\pi>1.445=0.9\pi/2$ (Fig. \ref{simi}a). Therefore the systems synchronize (Fig. \ref{simi}b and Fig. \ref{simi}c).
\paragraph{Incommensurate case}: $q=(0.9,1,0.9)$. Then the characteristic polynomial is
\[
\lambda^{28}+9/20\lambda^{19}+3/4\lambda^{18}+27/80\lambda^9+3/5 \sqrt{5},
\]

\noindent and $\Lambda$ containing the 28 roots is included in the stability region $\Omega$ (Fig. \ref{simi_ne}a). $\alpha_{min}=0.314>0.157=\pi/20$ and therefore the systems synchronize (Fig. \ref{simi_ne}b and Fig. \ref{simi_ne}c).

\begin{remark} As can be seen in Fig. \ref{simi} and Fig. \ref{simi_ne}, for this system, the synchronization is of \emph{phase synchronization}-like type (see e.g. \cite{rose}): occurrence of a certain relation between the phases of interacting systems, while the amplitudes
remain chaotic and are, in general, uncorrelated. In Fig. \ref{fig77} the value $+1$ of the cross correlation determined for the components $x_1$ and $y_1$, underlines the perfect phase synchronization. Similar result is obtained for $x_2$ and $y_2$. This phenomenon happens only with respect to $x_{1,2}$ and $y_{1,2}$, while along $x_3$ axis the synchronization is a complete (identical) synchronization.
\end{remark}

\subsection{Chen-Sprott systems}

Finally, let consider the synchronization of two non-identical systems: a Chen system and a Sprott system.

\noindent Let us consider Sprott's system (\ref{spr_apr}) the master system and Chen's system (\ref{chen_slave}) the slave system.

\noindent With $u_1=-a(y_2-y_1)+x_2-3e_1$, $u_2=-(c-a-y_3)\widetilde{sgn}(y_1)-cdy_2+x_3-2e_2$ and $u_3=-y_1\widetilde{sgn}(y_2)+by_3-x_1-x_2-0.5y_3+\widetilde{sgn}(x_1)-e_3$, one obtains the following error system
\begin{equation}
\begin{array}{l}
D_{\ast }^{q_1} e_{1}=-3e_1, \\
D_{\ast }^{q_2} e_{2}=-2e_2,\\
D_{\ast }^{q_3} e_{3}=-e_3,
\end{array}
\end{equation}

\noindent with the error matrix
\[
E=\left(
\begin{array}{ccc}
-3 & 0 & 0 \\
0 & -2 & 0 \\
0 & 0 & -1%
\end{array}%
\right).
\]

\paragraph{Commensurate case}: $q_1=q_2=q_3=0.99$. The eigenvalues are $(-3,-2,-1)$ and $\alpha_{min}=\pi>1.555=0.99\pi/2$ (Fig. \ref{fig11}a). Therefore the error sysem is asymptotically stable and the two systems synchronize (Fig. \ref{fig11}b).

\paragraph {Incommensurate case}: $q=(0.99,0.98,0.97)$. The characteristic equation is
\[
(\lambda^{99}+3)(\lambda^{98}+2)(\lambda^{97}+1)=0,
\]

\noindent for which $\alpha_{min}=0.031>0.015=\pi/200$. Therefore $\Lambda\subset \Omega$ (Fig. \ref{fig11}c) and the systems synchronize, Chen's system following the master system, Sprott's system (Fig. \ref{fig11}d).
\section{Conclusion and discussions}

In this paper we presented a way to synchronize a class of piece-wise systems of fractional-order. This is achievable due to the possibility to approximate continuously the piece-wise functions modeling the underlying systems. The approximation is realized via Cellina's Theorem for set-valued functions, after the piece-wise functions $s$ have been transformed into set-valued functions with the Filippov's regularization. There are several approximation choices. In this paper we use the sigmoid function, which is easy to implement numerically.

Once the systems approximated, the synchronization problem transforms into a synchronization of two (identical or not) continuous systems of fractional-order.

The known stability results for continuous systems of fractional-order, have been adapted for our class of problems.

The sigmoid function used in this paper can be replaced with other continuous approximation.

The approximation algorithm can be used for other purpose too, such as chaos control.

\vspace{3mm}

\appendix
\numberwithin{equation}{section}
\section*{Appendix}
\renewcommand{\theequation}{A.\arabic{equation}}
\setcounter{equation}{0}
\setcounter{theorem}{0}
\renewcommand{\thesubsection}{\Alph{subsection}.}
\subsection{Cellina's Theorem}\label{cela}

\begin{theorem}[Cellina's Theorem \cite{aubin} p. 84 and \cite{aubin2} p. 358]

Let $F:\mathbb{R}^n\rightrightarrows\mathbb{R}^n$ be an u.s.c. function. If the values of $F$ are nonempty and convex, then for every $\varepsilon>0$, there exists a locally Lipschitz single values function $f_\varepsilon:\mathbb{R}^n\rightarrow\mathbb{R}^n$ such that
\begin{equation}\label{cel}
Graph(f_\varepsilon)\subset B(Graph(F),\varepsilon),
\end{equation}
\noindent and for every $x\in \mathbb{R}^n$, $f_\varepsilon(x)$ belongs to teh convex hull of the image of $F$.
\end{theorem}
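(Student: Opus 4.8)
The plan is to build $f_\varepsilon$ explicitly as a finite convex combination of values of $F$, glued together by a locally Lipschitz partition of unity, and then to check the three asserted properties (local Lipschitzness, closeness of graphs, membership in the convex hull of the image) by direct estimates.

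First I would use upper semicontinuity to extract a convenient open cover of $\mathbb{R}^n$. For each $x\in\mathbb{R}^n$ the Minkowski sum $F(x)+B(0,\varepsilon/2)$ is an open neighborhood of $F(x)$, so by u.s.c.\ there is a radius $\delta(x)>0$, which we may also take $\le\varepsilon$, such that $F(y)\subset F(x)+B(0,\varepsilon/2)$ whenever $|y-x|<\delta(x)$. The balls $B(x,\delta(x)/2)$, $x\in\mathbb{R}^n$, cover $\mathbb{R}^n$; since $\mathbb{R}^n$ is paracompact I would pass to a locally finite open refinement $\{U_i\}_{i\in I}$, each $U_i$ contained in some $B(x_i,\delta(x_i)/2)$, and take a partition of unity $\{p_i\}_{i\in I}$ subordinate to it with every $p_i$ locally Lipschitz, $\sum_i p_i\equiv 1$, and $\mathrm{supp}\,p_i\subset U_i$. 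Such a partition exists because in $\mathbb{R}^n$ one may build the $p_i$ from the $1$-Lipschitz distance functions $x\mapsto d(x,\mathbb{R}^n\setminus U_i)$. Choosing arbitrary $y_i\in F(x_i)$, set
\[
f_\varepsilon(x):=\sum_{i\in I}p_i(x)\,y_i .
\]

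Then I would verify the conclusions. Local Lipschitzness is immediate: near any point only finitely many $p_i$ are nonzero and each $p_i y_i$ is locally Lipschitz. Since $f_\varepsilon(x)$ is a finite convex combination of the $y_i\in F(x_i)\subset F(\mathbb{R}^n)$, it lies in the convex hull of the image of $F$, giving the last claim. For the graph inclusion, fix $x$, let $I(x)=\{i:p_i(x)>0\}$ (finite), so every $i\in I(x)$ satisfies $|x-x_i|<\delta(x_i)/2$, and pick $i_0\in I(x)$ with $\delta(x_{i_0})$ maximal. Then for all $i\in I(x)$, $|x_i-x_{i_0}|\le|x_i-x|+|x-x_{i_0}|<\delta(x_i)/2+\delta(x_{i_0})/2\le\delta(x_{i_0})$, hence $F(x_i)\subset F(x_{i_0})+B(0,\varepsilon/2)$ and each $y_i$ is within $\varepsilon/2$ of some $z_i\in F(x_{i_0})$. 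Putting $z:=\sum_{i\in I(x)}p_i(x)z_i$, convexity of $F(x_{i_0})$ gives $z\in F(x_{i_0})$, so $(x_{i_0},z)\in\mathrm{Graph}(F)$, while $|f_\varepsilon(x)-z|\le\sum_i p_i(x)|y_i-z_i|<\varepsilon/2$ and $|x-x_{i_0}|<\delta(x_{i_0})/2\le\varepsilon/2$. Thus $(x,f_\varepsilon(x))$ lies within $\varepsilon$ of $\mathrm{Graph}(F)$, which is exactly \eqref{cel}.

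I expect the only real obstacle to sit in the first step: one needs a partition of unity that is genuinely locally Lipschitz and not merely continuous, and — more delicately — the selection of the index $i_0$ with the largest $\delta$, which is what drags all the base points $x_i$ active at $x$ into one single u.s.c.\ neighborhood $B(x_{i_0},\delta(x_{i_0}))$ and thereby lets the convexity of $F(x_{i_0})$ do its work. Without this "largest radius" device the pieces $F(x_i)$ need not cluster near a common point, and the convexity argument collapses; once it is in place, everything else is triangle-inequality bookkeeping.
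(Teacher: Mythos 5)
Your construction is correct and complete: it is precisely the classical partition-of-unity proof of Cellina's approximate selection theorem, including the essential ``largest radius'' device that pulls all active base points $x_i$ into a single ball $B(x_{i_0},\delta(x_{i_0}))$ so that convexity of $F(x_{i_0})$ can absorb the convex combination. The paper itself does not prove this statement --- it only cites it from Aubin--Cellina (p.~84) and Aubin--Frankowska (p.~358) --- and your argument is essentially the proof given in those references, so there is nothing to reconcile.
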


\noindent In (\ref{cel}), $B$ is the ball in $\mathbb{R}^n$ centered on $F$ and of $\varepsilon$ ray.

\noindent See for example Fig. \ref{fig1}b.

\noindent As known, locally Lipschitz functions are also continuous.
\subsection{Proof of Property 1}\label{ppp}

We shall prove that, for every $\delta>0$, there exists $\varepsilon>0$ such that $|\widetilde{sgn}(x)-1|<\varepsilon$. Solving the inequality, one obtains: $x\in(-\infty,-\delta ln\frac{2-\varepsilon}{\varepsilon})\bigcup(-\delta ln \frac{\varepsilon}{2-\varepsilon},\infty)$. For $\varepsilon\rightarrow0$, $-\delta ln\frac{2-\varepsilon}{\varepsilon}$ and $-\delta ln \frac{\varepsilon}{2-\varepsilon}$ tend to 0. Therefore $\widetilde{sgn}(x)\approx Sgn(x)$ for $x\not\in\mathcal{V}=(-\delta ln\frac{2-\varepsilon}{\varepsilon},-\delta ln\frac{\varepsilon}{2-\varepsilon})$ and, consequently, $ \tilde{X}^*\approx X^*$.
\qed

\vspace{3mm}

\noindent For $\varepsilon\in\{0,2\}$, $ln$ is not defined. However, our interest concerns $\varepsilon\neq 2,0$.
\subsection{Proof of Property 2}\label{pppp}

We prove that for every $\delta>0$ there exists $\varepsilon>0$ such that $|\frac{d}{dx}\widetilde{sgn}(x)|<\varepsilon$. By solving the inequality $\frac{d}{dx}\widetilde{sgn}(x)=\frac{2e^{x/d}}{\delta(e^{x/d}+1)^2}<\varepsilon$ (Fig. \ref{fig55}a), one obtains $|x|>\delta ln\frac{\sqrt{1-2\delta\varepsilon}+\delta\varepsilon-1}{\delta\varepsilon}$ (Fig. \ref{fig55}b). On the other side, for $x\neq0$, $\frac{d}{dx}sgn(x)=0$. Therefore, for $x\not\in\mathcal{V}=(-\delta ln\frac{\sqrt{1-2\delta\varepsilon}+\delta\varepsilon-1}{\delta\varepsilon}, \delta ln\frac{\sqrt{1-2\delta\varepsilon}+\delta\varepsilon-1}{\delta\varepsilon})$, $\frac{d}{dx}\widetilde{sgn}(x)\approx \frac{d}{dx}sgn(x)$ and therefore $\tilde{J}\approx J$ for $x\not\in\mathcal{V}$.
\qed
\vspace{3mm}

Above, beside Assumption (\textbf{H2}), it is supposed that $x_isgn(x_j)$, for $i,j\in{1,2,...,n}$, is differentiable on the interior of $\mathcal{D}_i$.

\subsection{Hausdorff distance}\label{haha}
The Hausdorff distance (or Hausdorff metric) $D_H$ measures how far two compact nonempty subsets of the considered
metric space $\mathbb{R}^n$ are from each other \cite{falco}. The Hausdorff distance between two curves in (here $\mathbb{R}^n$) is defined as the maximum distance to the closest point between the curves. If the curves are defined, as in our case, as the sets of ordered pair of coordinates
$A=\{P_1,P_2,...,P_{m_1}\}$ and $B=\{Q_1,Q_2,...,Q_{m_2}\}$ with $P_i=(x_1,x_2,...,x_n)$ and $Q_i=\{y_1,y_2,...,y_n)$, then $D_H$ is expressed as follows
\[
D_H(P,Q)=max\{d(P,Q),d(Q,P)\},
\]

\noindent where $d(P,Q)$ (generally differen to $d(Q,P)$) has the expression
\[
d(P,Q)=\underset{i}{max}\{d(P_i,Q)\},
\]

\noindent and
\[
d(P_i,Q)=\underset{j}{max}||P_i-Q_j||.
\]

%

\newpage

\newpage

\begin{figure*}
\begin{center}
  \includegraphics[clip,width=0.7\textwidth] {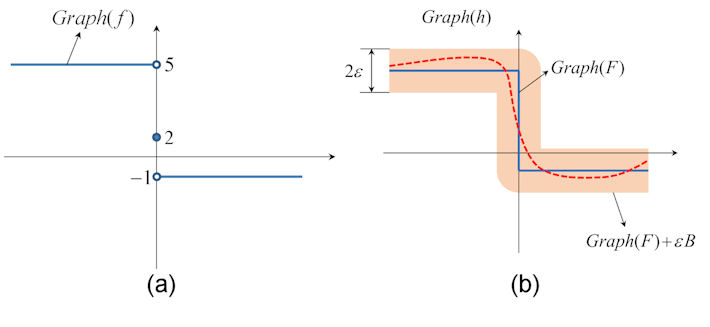}
\caption{a) Graph of $f(x)=2-3sgn(x)$. b) Graph of set-valued function $F(x)=2-3Sgn(x)$ (blue) and continuous approximation of $F(x)$ (red). }
\label{fig1}
\end{center}
\end{figure*}

\begin{figure*}
\begin{center}
  \includegraphics[clip,width=0.7\textwidth] {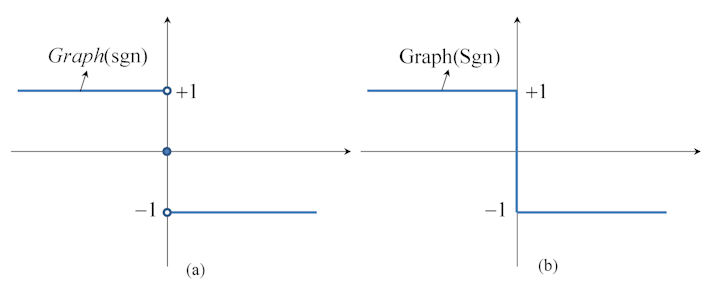}
\caption{a) Graph of $sgn$ function. b) Graph of $Sgn$ function.}
\label{fig2}
\end{center}
\end{figure*}

\begin{figure*}
\begin{center}
  \includegraphics[clip,width=0.8\textwidth] {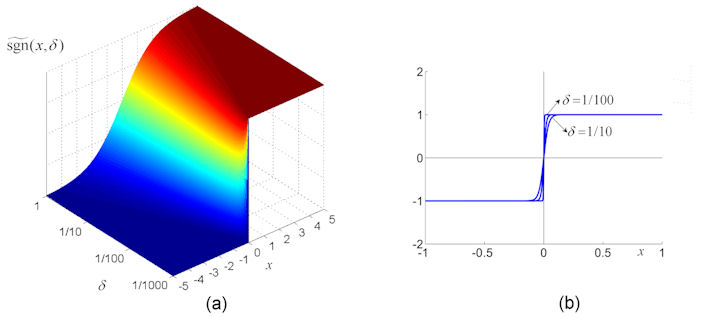}
\caption{Graph of sigmoid function $\widetilde{sgn}$. a) Dependence on $\delta$. b) Graph of $\widetilde{sgn}$ for several values $\delta$.}
\label{fig3}
\end{center}
\end{figure*}

\begin{figure*}
\begin{center}
  \includegraphics[clip,width=0.4\textwidth] {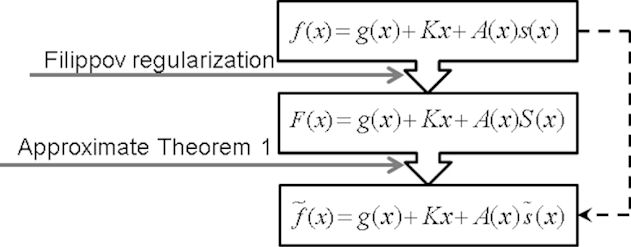}
\caption{Continuous approximation algorithm. Dotted line shows the direct way to approximate the discontinuous components $s$. }
\label{schema}
\end{center}
\end{figure*}

\begin{figure*}
\begin{center}
  \includegraphics[clip,width=1\textwidth] {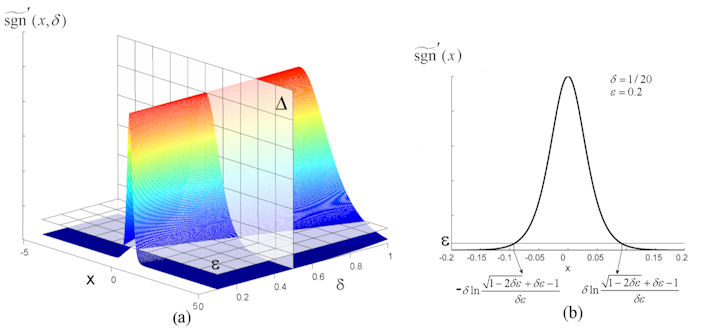}
\caption{a) Graph of $\frac{d}{dx}\widetilde{sgn}$, function on $\delta$. $\Delta$ and $\varepsilon$ represent vertical and horizontal planes through some $\delta$ and $\varepsilon$ values. b) Graph of $\frac{d}{dx}\widetilde{sgn}$ for $\delta=1/10$ and $\varepsilon=0.2$ ($\delta$ and $\varepsilon$ have been chosen larger for a clearer image).}
\label{fig55}
\end{center}
\end{figure*}

\begin{figure*}
\begin{center}
  \includegraphics[clip,width=0.8\textwidth] {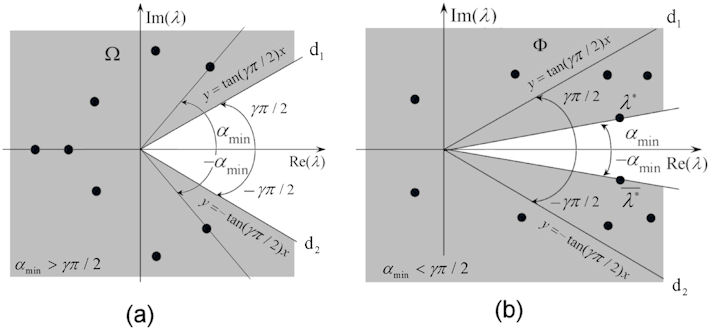}
\caption{a) Stability domain $\Omega$ determined by $\alpha_{min}>\gamma\pi/2$. b) Instability domain $\Phi$ determined by $\alpha_{min}\leq\gamma\pi/2$. $d_{1,2}$ are the stability separatrices.}
\label{fig66}
\end{center}
\end{figure*}

\begin{figure*}
\begin{center}
  \includegraphics[clip,width=0.9\textwidth] {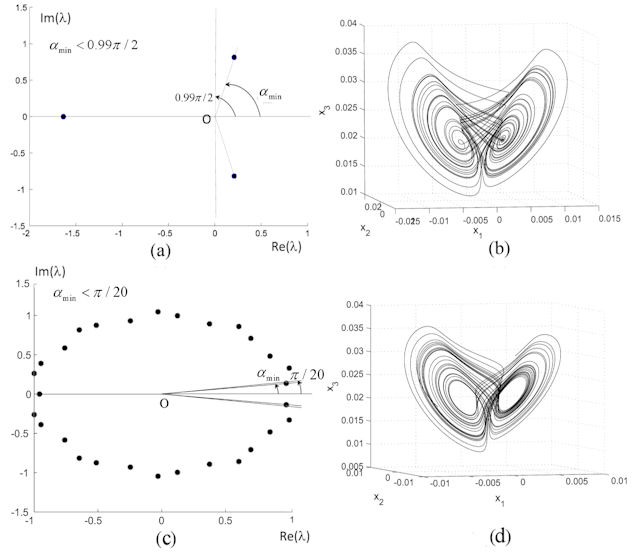}
\caption{Chaotic Piece-wise linear Chen system for: top $q=0.99$; bottom $q=(1,0.9,1)$. a) Commensurate case $q=0.99$, $\alpha_{min}<0.99\pi/2$ and the system is unstable. b) The underlying chaotic attractor. c) Incommensurate $q=(1,0.9,1)$, $\alpha_{min}<\pi/20$ and the system is unstable. d) The underlying chaotic attractor. }
\label{fig6}
\end{center}
\end{figure*}

\begin{figure*}
\begin{center}
  \includegraphics[clip,width=0.8\textwidth] {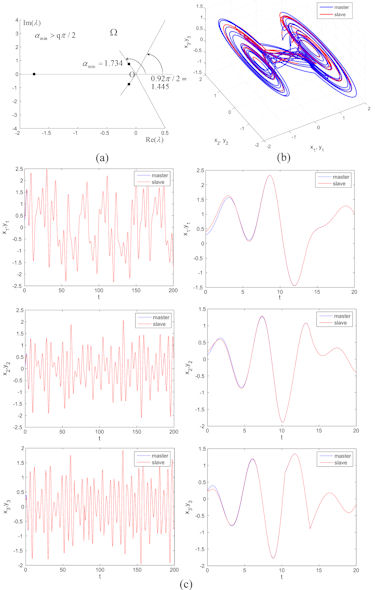}
\caption{Synchronization of Sprott system for $q=0.92$. a) $\alpha_{min}>0.92\pi/2$ and the error system is stable and synchronization holds. b) Phase plot of both trajectories (in blue the master system, in red the slave system). c) Time series: left column $t\in[0,200]$, right column $t\in[0,20]$.}
\label{fig7}
\end{center}
\end{figure*}

\begin{figure*}
\begin{center}
  \includegraphics[clip,width=0.8\textwidth] {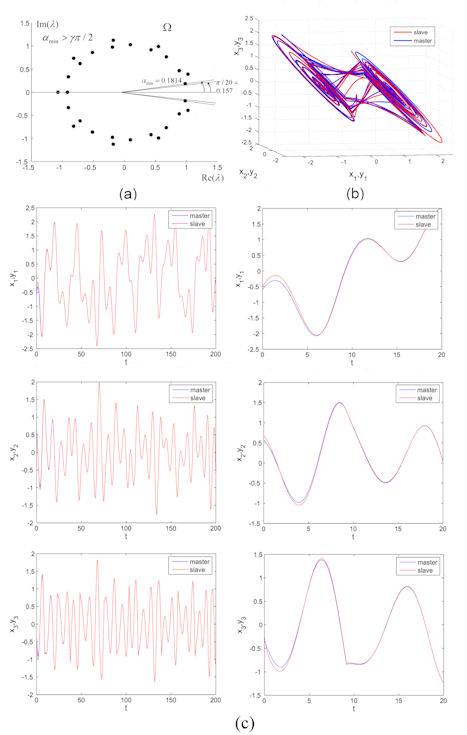}
\caption{Synchronization of Sprott system for $q=(0.9,1,0.9)$. a) $\alpha_{min}>\pi/20$ and the error system is stable and synchronization holds. b) Phase plot of both trajectories (in blue the master system, in red the slave system). c) Time series: left column $t\in[0,200]$, right column $t\in[0,20]$.}
\label{fig8}
\end{center}
\end{figure*}

\begin{figure*}
\begin{center}
  \includegraphics[clip,width=0.8\textwidth] {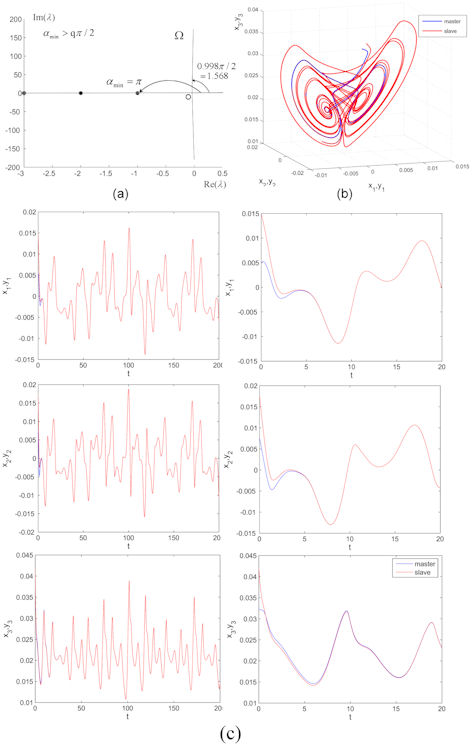}
\caption{Synchronization of piece-wise Chen system for $q_1=q_2=q_3=0.998$. a) $\alpha_{min}>0.998\pi/2$ and the error system is stable and synchronization holds. b) Phase plot of both trajectories (in blue the master system, in red the slave system). c) Time series: left column $t\in[0,200]$, right column $t\in[0,20]$.}
\label{fig9}
\end{center}
\end{figure*}

\begin{figure*}
\begin{center}
  \includegraphics[clip,width=0.8\textwidth] {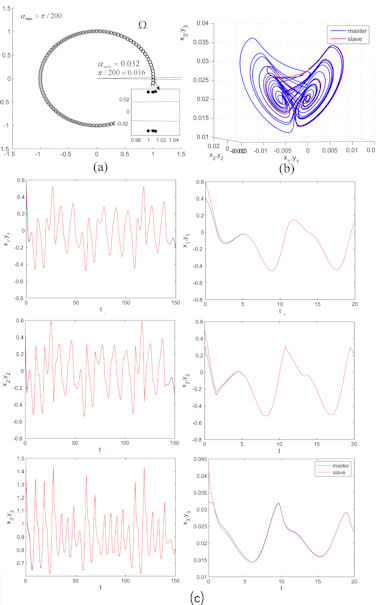}
\caption{Synchronization of piece-wise Chen system for $q=(0.99,0.98,0.97)$. a) $\alpha_{min}>\pi/200$ and the error system is stable and synchronization holds. Detail reveal the positions of the roots versus the separatrices. b)Phase plot of both trajectories (in blue the master system, in red the slave system). c) Time series: left column $t\in[0,200]$, right column $t\in[0,20]$.}
\label{fig10}
\end{center}
\end{figure*}

\begin{figure*}
\begin{center}
  \includegraphics[clip,width=0.8\textwidth] {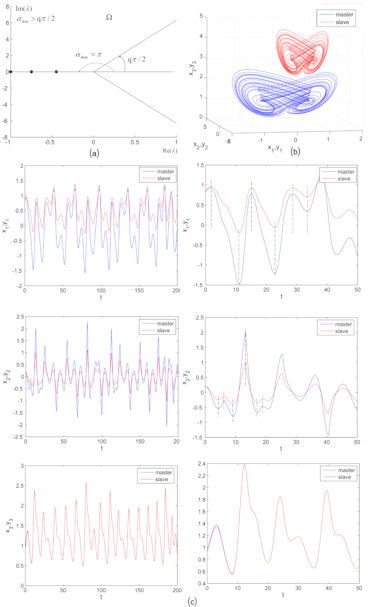}
\caption{Synchronization of piece-wise Shimizu-–Morioka system for $q_1=q_2=q_3=0.9$. a) $\alpha_{min}=\pi>0.9\pi/2$ and the error system is stable and synchronization holds. b)Phase plot of both trajectories (in blue the master system, in red the slave system). c) Time series: left column $t\in[0,200]$, right column $t\in[0,20]$. Dotted lines underline phase synchronization.}
\label{simi}
\end{center}
\end{figure*}

\begin{figure*}
\begin{center}
  \includegraphics[clip,width=0.8\textwidth] {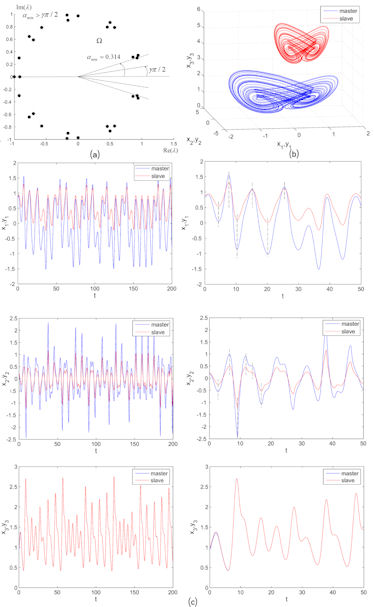}
\caption{Synchronization of piece-wise Shimizu–-Morioka system for $q=(0.9,1,0.9)$. a) $\alpha_{min}=0.314>\pi/20$ and the error system is stable and synchronization holds. b) Plot of both trajectories translated along $x_3$ axis (in blue the master system, in red the slave system). c) Time series: left column $t\in[0,200]$, right column $t\in[0,20]$. Dotted lines underline phase synchronization.}
\label{simi_ne}
\end{center}
\end{figure*}

\begin{figure*}[h]
\begin{center}
  \includegraphics[clip,width=0.5\textwidth] {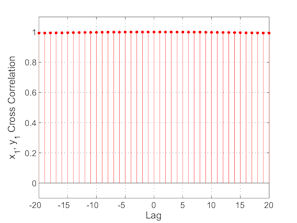}
\caption{Cross correlation for the components $x_1$ and $y_1$ of two Shimizu–-Morioka synchronized systems. $x_1$ and $y_1$ are in phase since the value of the cross correlation is $+1$.}
\label{fig77}
\end{center}
\end{figure*}

\begin{figure*}
\begin{center}
  \includegraphics[clip,width=0.8\textwidth] {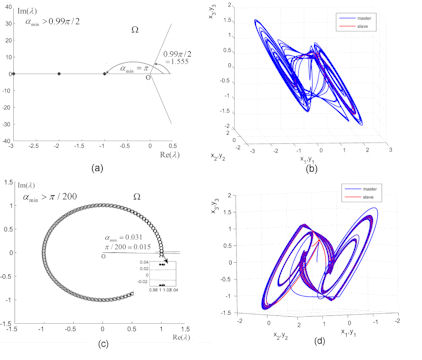}
\caption{Synchronization of Sprott's system and piece-wise Chen's system. a) For $q_1=q_2=q_3=0.99$, $\alpha_{min}>0.99\pi/2$ and the error system is stable and synchronization holds. b) Phase plot of both trajectories translated along $x_3$ axis (in blue the master system, in red the slave system). c) For $q=(0.99,098,0.97)$, $\alpha_{min}>\pi/200$ and the systems synchronize. Detail reveals the positions of the roots versus the separatrices. d) Phase plot of both trajectories (in blue the master system, in red the slave system).}
\label{fig11}
\end{center}
\end{figure*}

\end{document}